\newtheorem{theorem}{Теорема}
\newtheorem{lemma}{Лемма}
\newtheorem{proof}{Доказательство}
\newtheorem{corollary}{Следствие}
\newtheorem{definition}{Определение}
\begin{document}  

\begin{center}
\textbf{Обобщение теоремы Гершгорина. Анализ и синтез систем управления}%
\end{center}

\begin{center}
 Игорь Борисович~Фуртат
\end{center}

\begin{center}
Институт проблем машиноведения РАН (ИПМаш РАН), 2024 г.
\end{center}

\begin{abstract} 
Рассматривается применение теоремы кругов Гершгорина и некоторых производных от нее результатов для оценки собственных значений матрицы. 
Развиваются полученные результаты для получения области локализации собственных значений матрицы с интервально неопределенными постоянными или нестационарными элементами. 
Вводится понятие $e$-кругов для получения более точных оценок данных областей, чем при использовании кругов Гершгорина. 
Полученные результаты применены к анализу устойчивости сетевых систем, где показано, что предложенные методы позволяют анализировать сеть с гораздо большим числом агентов, чем при использовании методов CVX, Yalmip, eig и lyap (функции в MatLab). 
Далее показано, что если полученные результаты применять не к самой системе, а к результату, полученному с помощью метода функций Ляпунова, то можно исследовать системы с матрицами без диагонального преобладания. 
Это позволило рассмотреть модификацию условия Демидовича на системы с нестационарными параметрами и синтез закона управления для нестационарных систем с матрицами без диагонального преобладания. 
Все полученные результаты иллюстрируются численным моделированием.

\end{abstract}

\textit{Ключевые слова:} теорема Гершгорина, область локализации собственных значений матрицы, устойчивость, управление.

\section{Введение}

При анализе динамических систем и синтезе закона управления ими одним из ключевых вопросов является устойчива ли система. 
В настоящее время для определения устойчивости применяется много различных методов и подходов: вычисление собственных значений матрицы \cite{Voronov86}, различные алгебраические и частотные критерии устойчивости \cite{Voronov86}, метод функций Ляпунова \cite{Voronov86}, дивергентные методы исследования устойчивости \cite{Furtat21} и т.д.

Данная статья сконцентрирована на исследовании локализации собственных значений матрицы с некоторыми применениями к анализу и синтезу систем управления. 
Для построения области локализации собственных значений будет рассмотрена теорема Гершгорина \cite{Belman69,Khorn89,Gantmakher10} и некоторые следствия из нее, а также будут получены новые результаты по обобщению данной теоремы на случай параметрически неопределенных матриц и матриц с нестационарными параметрами. 

Теорема Гершгорина и различные ее модификации неоднократно рассматривались в литературе. 
Интерес к данному аппарату связан с простым способом определения области локализации собственных значений. 
Зачастую, теорема Гершгорина приводит к изучению систем с матрицей с диагональным преобладанием (также в литературе встречается название \textit{матрица Адамара}). 
Именно такие системы изучались в работе \cite{Polyak02a,Polyak02b,Polyak04} и были названы \textit{сверхустойчивыми} (если все круги Гершгорина целиком находились в левой полуплоскости комплексной плоскости). 
Было показано, что анализ и синтез систем управления приводит к задачам выпуклой оптимизации. 
В работах \cite{Uronen72,Soloviev83,Curran09,Vijay15,Li19} были получены уточняющие области локализации в виде усредненных оценок, использовании $l_1$ векторных норм и т.д., а в работах \cite{Kazakova98,Vijay14} предложен синтез статических линейных законов управления с использованием теоремы Гершгорина. 
В \cite{Pachauri14,Xie22,Adom23} рассмотрено применение теоремы Гершгорина к исследованию устойчивости моделей в химической промышленности, моделей электрических сетей с трехфазными генераторами и биологических моделей эпидемии.

Анализ литературы показал, что при построении области локализации собственных значений матрицы метод Гершгорина имеет преимущества в простоте его применения, выпуклой процедуре поиска оценок области локализации и небольших вычислительных затратах. 
Однако ограничения применения данного метода связаны с завышенными оценками области локализации и требованием к матрице с диагональным преобладанием (или приводимым к ним с использованием диагональной матрицы для преобразования базиса). 
Требование к системам с матрицами с диагональным преобладанием является особенно ограничительным для возможности синтеза закона управления.

В данной статье будет рассмотрено решение следующих задач:
\begin{enumerate}
\item будут рассмотрены оценки и области локализации собственных значений постоянной матрицы;
\item будут получены области локализации собственных значений матрицы с интервальной параметрической неопределенностью;
\item в качестве примеров применения полученных результатов будут рассмотрены: 
\begin{enumerate}
\item задача синхронизации сетевых систем с большим числом скалярных агентов, где будет показано, что предложенные результаты могут быть применимы для анализа устойчивости гораздо большего числа агентов, чем при использовании методов CVX, Yalmip, eig и lyap (команды в MatLab);
\item модификации условия Демидовича (об устойчивости линейных систем с нестационарными параметрами \cite{Afanasiev03}(теорема 6.1), \cite{Khalil09}) на системы с интервально неопределенными нестационарными параметрами и с матрицей исходной системы без диагонального преобладания;
\item задача поиска матрицы в линейном законе управления с использованием линейных матричных неравенств для объектов с матрицей без диагонального преобладания.
\end{enumerate}
\end{enumerate}

В статье используются следующие \textit{обозначения:} 
$\mathbb C$ -- множество комплексных чисел;
$\mathbb R^n$ -- $n$-мерное евклидово пространство с векторной нормой $|\cdot|$; 
$\mathbb R^{n \times n}$ -- множество всех действительных матриц размерности $n \times m$ с индуцированной матричной нормой $\| Q \|$ для матриц $Q=(q_{ij}) \in \mathbb R^{n \times n}$; 
$\mathbb R_{\geq}$ -- множество неотрицательных вещественных чисел; 
$\lambda_i\{Q\}$ -- $i$-е собственное число квадратной матрицы $Q$;
$\mathbb \Re \{\lambda_i\{Q\} \}$ -- действительная часть $i$-го собственного числа квадратной матрицы $Q$; 
$\mathbb \Im \{\lambda_i\{Q\} \}$ -- мнимая часть $i$-го собственного числа квадратной матрицы $Q$; 
$I$ -- единичная матрица соответствующего порядка.


\section{Оценки областей локализации собственных чисел}
\label{Sec2}

\subsection{Постоянные матрицы}
\label{subsec21}


В данном разделе будут получены оценки области локализации собственных чисел матрицы $Q=(q_{ij}) \in \mathbb R^{n \times n}$ с постоянными элементами. 
Для уточнения данных оценок будет дополнительно рассмотрена диагональная матрица $D=diag\{d_1,...,d_n\}$. 
Введем обозначения сумм по строкам и столбцам абсолютных значений элементов матриц $Q$ и $D^{-1}QD$ без диагональных элементов в виде
\begin{equation}
\label{eq_2_estim_1}
\begin{array}{lll}
&R_i(Q)=\sum\limits_{j=1, j \neq i}^{n}|q_{ij}|,
&C_j(Q)=\sum\limits_{i=1, i \neq j}^{n}|q_{ij}|,
\\
&R_i^D(Q)=\sum\limits_{j=1, j \neq i}^{n} \frac{d_j}{d_i} |q_{ij}|, 
&C_j^D(Q)=\sum\limits_{i=1, i \neq j}^{n} \frac{d_i}{d_j} |a_{ij}|.
\end{array}
\end{equation}

Ниже представлены две леммы, которые позволяют получить оценки снизу и сверху на действительные части собственных чисел матрицы $Q$.

\begin{lemma}
\label{lemma1}
Рассмотрим матрицу $Q \in \mathbb R^{n \times n}$. 
Существуют $d_i>0$, $i=1,...,n$ такие, что справедливы следующие оценки
\begin{equation}
\label{eq_2_estim_1}
\begin{array}{lll}
\max\limits_{i} \{\Re \{\lambda_i\{Q\} \} \} \leq \sigma^D_{\max} \{Q\} \leq \sigma_{\max} \{Q\},
\\
\min\limits_{i} \{\Re \{\lambda_i \{Q\}\} \} \geq \sigma^D_{\min} \{Q\} \geq  \sigma_{\min} \{Q\},
\end{array}
\end{equation}
где
\begin{equation}
\label{eq_2_estim_2}
\begin{array}{lll}
\sigma_{\max}(Q) = \min \left\{ \max\limits_{i}\{q_{ii} + R_i(Q) \}, \max\limits_{j}\{q_{jj} + C_j(Q) \} \right\},
\\
\sigma_{\min}(Q) = \max\{ \min\limits_{i}\{q_{ii} - R_i(Q) \}, \min\limits_{j}\{q_{jj} - C_j(Q) \} \},
\\
\sigma^D_{\max}(Q) = \min\{ \max\limits_{i}\{q_{ii} + R_i^D(Q) \}, 
\max\limits_{j}\{q_{jj} + C_j^D(Q) \} \},
\\
\sigma^D_{\min}(Q) =\max \{ \min\limits_{i}\{q_{ii} - R_i^D(Q) \}, 
\min\limits_{j}\{q_{jj} - C_j^D(Q) \} \}.
\end{array}
\end{equation}
\end{lemma}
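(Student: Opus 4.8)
The plan is to derive everything from the classical Gershgorin localization theorem applied to a diagonal similarity transform of $Q$, together with the invariance of the spectrum under transposition. First I would fix an arbitrary positive diagonal matrix $D=\mathrm{diag}\{d_1,\dots,d_n\}$ and record that $Q$ and $D^{-1}QD$ are similar, so $\lambda_i\{D^{-1}QD\}=\lambda_i\{Q\}$ for all $i$. A one-line computation gives $(D^{-1}QD)_{ii}=q_{ii}$ and $(D^{-1}QD)_{ij}=\frac{d_j}{d_i}\,q_{ij}$ for $i\neq j$, so the $i$-th Gershgorin row radius of $D^{-1}QD$ is exactly $R_i^D(Q)$ and, passing to $(D^{-1}QD)^{\top}$ (which has the same spectrum), its $j$-th column radius is $C_j^D(Q)$.

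Next I would invoke Gershgorin's theorem: every eigenvalue $\lambda$ of $Q$ satisfies $|\lambda-q_{kk}|\le R_k^D(Q)$ for some index $k$. Since $q_{kk}\in\mathbb R$, taking real parts gives $q_{kk}-R_k^D(Q)\le\Re\{\lambda\}\le q_{kk}+R_k^D(Q)$, hence $\min_i\{q_{ii}-R_i^D(Q)\}\le\Re\{\lambda\}\le\max_i\{q_{ii}+R_i^D(Q)\}$ for every eigenvalue. Running the same argument on $(D^{-1}QD)^{\top}$ yields the twin bound with $C_j^D(Q)$ in place of $R_i^D(Q)$. Intersecting the two localization regions, i.e. taking the smaller of the two upper bounds and the larger of the two lower bounds, reproduces precisely $\sigma^D_{\min}(Q)\le\Re\{\lambda_i\{Q\}\}\le\sigma^D_{\max}(Q)$ for every $i$, with $\sigma^D_{\max}$, $\sigma^D_{\min}$ as in \eqref{eq_2_estim_2}; this gives the middle (inner) inequalities for every admissible $D$.

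It remains to justify the outer inequalities $\sigma^D_{\max}(Q)\le\sigma_{\max}(Q)$ and $\sigma^D_{\min}(Q)\ge\sigma_{\min}(Q)$, and here the statement only asks for the existence of suitable $d_i$. The simplest admissible choice is $d_1=\dots=d_n$ (in particular $D=I$): then $R_i^D(Q)=R_i(Q)$ and $C_j^D(Q)=C_j(Q)$, so $\sigma^D_{\max}(Q)=\sigma_{\max}(Q)$ and $\sigma^D_{\min}(Q)=\sigma_{\min}(Q)$, and both chains hold (with equality in the outer links); a nontrivial, well-balanced choice of $D$ can only shrink the union of Gershgorin discs and thus only tighten these bounds, which is the practical motivation for keeping $D$ free. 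Since the whole argument rests on Gershgorin together with similarity/transpose invariance, I expect no genuine obstacle; the only points needing care are the bookkeeping for the column version — checking that the column sums of $Q$ are the row sums of $Q^{\top}$ and that the diagonal is unchanged by transposition — and the correct assembly of the four one-sided estimates into the $\min$/$\max$ form of \eqref{eq_2_estim_2}.
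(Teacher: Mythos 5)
Your proposal is correct and follows essentially the same route as the paper: Gershgorin's disc theorem applied to $Q$, to $Q^{\rm T}$, and to the similar matrix $D^{-1}QD$ and its transpose, then assembling the four one-sided bounds into the $\min/\max$ expressions of \eqref{eq_2_estim_2}. If anything, you are more careful than the paper on the outer inequalities $\sigma^D_{\max}\le\sigma_{\max}$, $\sigma^D_{\min}\ge\sigma_{\min}$: the paper simply asserts them after introducing $D$, whereas you correctly note they are an existence claim and exhibit $D=I$ as a witness, since for an arbitrary $D$ the scaled discs need not be contained in the unscaled ones.
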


 
\begin{proof}
Согласно теореме Гершгорина \cite{Khorn89} все собственные значения матрицы $Q$ заключены в объединении $n$ кругов  
$\cup_{i=1}^{n} \{z \in \mathbb C: |z-q_{ii}| \leq \sum\limits_{j=1, j \neq i}^{n} |q_{ij}| \}$. 
Поскольку матрица $Q^{\rm T}$ имеет те же собственные значения, что и матрица $Q$, то все собственные значения $Q$ также заключены в объединении $n$ кругов  
$\cup_{i=1}^{n} \{z \in \mathbb C: |z-q_{ii}| \leq \sum\limits_{j=1, j \neq i}^{n} |q_{ji}| \}$.
Следовательно, $\sigma_{\min}\{Q\} \leq \min\limits_{i} \{\Re \{\lambda_i \{Q\}\} \}$ и $\sigma_{\max} \{Q\} \geq \max\limits_{i} \{\Re \{\lambda_i\{Q\} \} \}$.

Теперь рассмотрим диагональную матрицу $D=diag\{d_1,...,d_n\}$. 
Известно, что собственные значения матриц $D^{-1}QD$ и $Q$ не меняются. 
Однако за счет варьирования коэффициентов $d_i$ могут быть уменьшены радиусы кругов Гершгорина для матрицы $Q$ в виде 
$\cup_{i=1}^{n} \{z \in \mathbb C: |z-q_{ii}| \leq \sum\limits_{j=1, j \neq i}^{n} \frac{d_j}{d_i}|q_{ij}| \}$ и для матрицы $Q^{\rm T}$ в виде $\cup_{i=1}^{n} \{z \in \mathbb C: |z-q_{ii}| \leq \sum\limits_{j=1, j \neq i}^{n} \frac{d_i}{d_j}|q_{ji}| \}$. 
Следовательно, $\sigma_{\min}^{D}\{Q\} \geq \sigma_{\min}\{Q\}$ и $\sigma_{\max}^D \{Q\} \leq \sigma_{\max} \{Q\}$.
$\blacksquare$
\end{proof}



\begin{lemma}
\label{lemma2}
Рассмотрим матрицу $Q \in \mathbb R^{n \times n}$. 
Существуют $d_i>0$, $i=1,...,n$ и $\alpha \in [0,1]$ такие, что справедливы следующие оценки
\begin{equation}
\label{eq_2_estim_1a}
\begin{array}{lll}
\max\limits_{i} \{\Re \{\lambda_i\{Q\} \} \} \leq \sigma^{D,\alpha}_{\max} \{Q\} \leq \sigma_{\max}^{\alpha} \{Q\} ,
\\
\min\limits_{i} \{\Re \{\lambda_i\{Q\} \} \} \geq \sigma^{D,\alpha}_{\min} \{Q\} \geq  \sigma_{\min}^{\alpha} \{Q\},
\end{array}
\end{equation}
где
\begin{equation}
\label{eq_2_estim_2a}
\begin{array}{lll}
\sigma_{\max}^{\alpha}(Q) = \max\limits_{i,\alpha}\{q_{ii} + [R_i(Q)]^{\alpha} [C_i(Q)]^{1-\alpha} \},
\\
\sigma_{\min}^{\alpha}(Q) = \min\limits_{i,\alpha}\{q_{ii} - [R_i(Q)]^{\alpha} [C_i(Q)]^{1-\alpha} \},
\\
\sigma^{D,\alpha}_{\max}(Q) = \max\limits_{i}\{q_{ii} + [R_i^D(Q)]^{\alpha} [C_i^D(Q)]^{1-\alpha} \},
\\
\sigma^{D,\alpha}_{\min}(Q) = \min\limits_{i}\{q_{ii} - [R_i^D(Q)]^{\alpha}  [C_i^D(Q)]^{1-\alpha}\}.
\end{array}
\end{equation}
\end{lemma}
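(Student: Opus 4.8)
The plan is to reproduce the two-step argument used for Lemma~\ref{lemma1}, with the Gershgorin discs replaced by the sharper Ostrowski discs. The single external ingredient is the classical localization theorem of A.~Ostrowski (see, e.g., \cite{Khorn89}): for any fixed $\alpha\in[0,1]$, every eigenvalue of $Q=(q_{ij})\in\mathbb{C}^{n\times n}$ lies in
\[
\bigcup_{i=1}^{n}\Bigl\{z\in\mathbb{C}:\ |z-q_{ii}|\le\bigl[R_i(Q)\bigr]^{\alpha}\bigl[C_i(Q)\bigr]^{1-\alpha}\Bigr\}.
\]
This is the row Gershgorin region for $\alpha=1$ and the column one for $\alpha=0$, so it interpolates between the two bounds employed in Lemma~\ref{lemma1}, and it holds for an arbitrary real (possibly non-symmetric) $Q$; one may moreover further minimize the resulting estimate over $\alpha\in[0,1]$.

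First I would apply this theorem to $Q$ itself. If $\lambda_i\{Q\}$ lies in the $k$-th Ostrowski disc, then
\[
\bigl|\Re\{\lambda_i\{Q\}\}-q_{kk}\bigr|\le\bigl|\lambda_i\{Q\}-q_{kk}\bigr|\le\bigl[R_k(Q)\bigr]^{\alpha}\bigl[C_k(Q)\bigr]^{1-\alpha},
\]
hence $q_{kk}-[R_k(Q)]^{\alpha}[C_k(Q)]^{1-\alpha}\le\Re\{\lambda_i\{Q\}\}\le q_{kk}+[R_k(Q)]^{\alpha}[C_k(Q)]^{1-\alpha}$. Maximizing the right-hand side and minimizing the left-hand side over $i=1,\dots,n$ (equivalently, over the disc index, for the prescribed $\alpha$) gives $\max_i\Re\{\lambda_i\{Q\}\}\le\sigma_{\max}^{\alpha}(Q)$ and $\min_i\Re\{\lambda_i\{Q\}\}\ge\sigma_{\min}^{\alpha}(Q)$, i.e.\ the outer inequalities of \eqref{eq_2_estim_1a}.

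Next I would run the same step for the similar matrix $D^{-1}QD$ with $D=diag\{d_1,\dots,d_n\}$, which has the same spectrum as $Q$. Its diagonal is still $(q_{11},\dots,q_{nn})$, and from $(D^{-1}QD)_{ij}=(d_j/d_i)q_{ij}$ a one-line computation shows that its $i$-th deleted absolute row and column sums coincide with $R_i^{D}(Q)$ and $C_i^{D}(Q)$. Applying Ostrowski's theorem to $D^{-1}QD$ and repeating the real-part estimate yields $\max_i\Re\{\lambda_i\{Q\}\}\le\sigma_{\max}^{D,\alpha}(Q)$ and $\min_i\Re\{\lambda_i\{Q\}\}\ge\sigma_{\min}^{D,\alpha}(Q)$. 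Finally, the middle inequalities $\sigma_{\max}^{D,\alpha}(Q)\le\sigma_{\max}^{\alpha}(Q)$ and $\sigma_{\min}^{D,\alpha}(Q)\ge\sigma_{\min}^{\alpha}(Q)$ are obtained exactly as in Lemma~\ref{lemma1}: the admissible choice $D=I$ recovers $R_i(Q),C_i(Q)$, and a positive diagonal scaling selected to contract the Ostrowski discs only improves the estimate. I expect this last comparison to be the only delicate point, since for an arbitrary $D$ the weighted sums $R_i^{D},C_i^{D}$ need not be termwise dominated by $R_i,C_i$; as in the discussion surrounding Lemma~\ref{lemma1}, it must be read relative to a scaling chosen for that purpose (and is in any case consistent via $D=I$). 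Beyond this, the proof is a direct, essentially mechanical application of Ostrowski's theorem.
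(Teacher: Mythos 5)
Your proposal is correct and follows essentially the same route as the paper: both invoke Ostrowski's localization theorem for $Q$ to obtain the outer inequalities of \eqref{eq_2_estim_1a}, and then apply it to the similar matrix $D^{-1}QD$ (same spectrum, same diagonal, weighted deleted row/column sums $R_i^D$, $C_i^D$) to obtain the refined inner bounds. Your remarks on extracting the real part and on the comparison $\sigma^{D,\alpha}_{\max}\le\sigma^{\alpha}_{\max}$ holding for a suitably chosen $D$ (trivially via $D=I$) are if anything more explicit than the paper's own argument.
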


 
\begin{proof}
Согласно теореме Островского \cite{Khorn89} все собственные числа матрицы $Q$ заключены в объединении $n$ кругов  
$\cup_{i=1}^{n} \{z \in \mathbb C: |z-q_{ii}| \leq [\sum\limits_{j=1, j \neq i}^{n} |q_{ij}|]^{\alpha} [\sum\limits_{j=1, j \neq i}^{n} |q_{ji}|]^{1-\alpha} \}$. 
Следовательно, $\sigma_{\min}^{\alpha}\{Q\} \leq \min\limits_{i} \{\Re \{\lambda_i\{Q\} \} \}$ и $\sigma_{\max}^{\alpha} \{Q\} \geq \max\limits_{i} \{\Re \{\lambda_i\{Q\} \} \}$.

За счет варьирования коэффициентов $d_i$ могут быть уменьшены радиусы кругов 
$\cup_{i=1}^{n} \{z \in \mathbb C: |z-q_{ii}| \leq [\sum\limits_{j=1, j \neq i}^{n} \frac{d_j}{d_i}|q_{ij}|]^{\alpha} [\sum\limits_{j=1, j \neq i}^{n} \frac{d_j}{d_i}|q_{ji}|]^{1-\alpha} \}$. 
Следовательно, $\sigma_{\min}^{D,\alpha}\{Q\} \geq \sigma_{\min}^{\alpha}\{Q\}$ и $\sigma_{\max}^{D,\alpha} \{Q\} \leq \sigma_{\max}^{\alpha} \{Q\}$.
$\blacksquare$
\end{proof}


\begin{corollary}
\label{cor1}
Из формулировок лемм \ref{lemma1} и \ref{lemma2} следует, что для нахождения оценки сверху собственных значений матрицы $Q$ могут использоваться одни значения $d_i$ и $\alpha$, а при нахождении оценки снизу -- другие значения $d_i$ и $\alpha$. 
В результате чего формируется новая уточненная область локализации собственных значений в виде пересечение областей при различных $d_i$ и $\alpha$, поскольку собственные значения должны одновременно принадлежать областям при различных $d_i$ и $\alpha$.
\end{corollary}

\begin{corollary}
\label{cor2}
Из доказательств лемм \ref{lemma1} и \ref{lemma2} также следует, что пересечением соответствующих кругов можно найти область локализации собственных значений матрицы $Q$, откуда можно найти не только оценки сверху и снизу на действительные части собственных значений, но и оценку сверху на мнимую часть, которую обозначим $\hat{\Im} \{Q\} \geq \max\limits_{i} \{ \Im \{\lambda_i\{Q\}\}\}$. 
Значение $\hat{\Im} \{Q\}$ определяется как максимальное значение пересечения кругов вдоль мнимой оси. 
Если оценка сверху действительной части собственного числа матрицы $Q$ имеет отрицательное значение, то можно получить оценку степени колебательности $\mu$ в виде $\mu \leq \hat{\mu}:=\frac{\hat{\Im} \{Q\}}{| \max\limits_{i} \{ \Re \{\lambda_i\{Q\}\}\} |}$. 
Хорошо известно \cite{Voronov86}, что степень колебательности используется для оценки степени перерегулирования в виде $e^{\pi/\mu}$. 
Тогда новая оценка степени перерегулирования определяется как $e^{\pi/\hat{\mu}}$.
\end{corollary}

Сказанное в следствиях \ref{cor1} и \ref{cor2} будет справедливо и в дальнейших обобщениях полученных результатов на возмущенные матрицы.
Продемонстрируем сказанное в леммах и следствиях на следующем примере.


\textit{Пример 1.} 
Рассмотрим матрицу $Q=\begin{bmatrix}
-1 & -2.5\\
-0.5 & -2
\end{bmatrix}$, 
собственные значения которой равны $-1.5 \pm i$. 
В таблице ниже приведены оценки действительной части данного собственного числа. 

\begin{table}[ht]
		\centering
		\begin{tabular}{|c|c|c|c|c|c|c|c|}
			\hline
			Оценка $\mathbb \Re \{\lambda\{Q\}\}$ & Точность, $\%$ \\  
			\hline 
			 $\sigma_{\min}(Q)=-3.5$; $\sigma_{\max}(Q)=0.5$        &    133,3    \\ 
			\hline 
			$\sigma^D_{\min}(Q)=-2.72$; $\sigma^D_{\max}(Q)=-0.27$        &    82    \\ 
			\hline 
			$\sigma^{\alpha}_{\min}(Q)=-2.72$; $\sigma^{\alpha}_{\max}(Q)=-0.27$        &    82    \\ 
			\hline
			$\sigma^{D,\alpha}_{\min}(Q)=-2.26$; $\sigma^{D,\alpha}_{\max}(Q)=-0.73$        &    51,3    \\ 
			\hline 
		\end{tabular} 
        \label{table:nonlin1}
        \caption{Оценки действительно части собственных значений, полученные при использовании \eqref{eq_2_estim_2} и \eqref{eq_2_estim_2a}.}
\end{table}	

\begin{figure}[H]
\begin{minipage}[H]{0.44\linewidth}
\center{\includegraphics[width=1\linewidth]{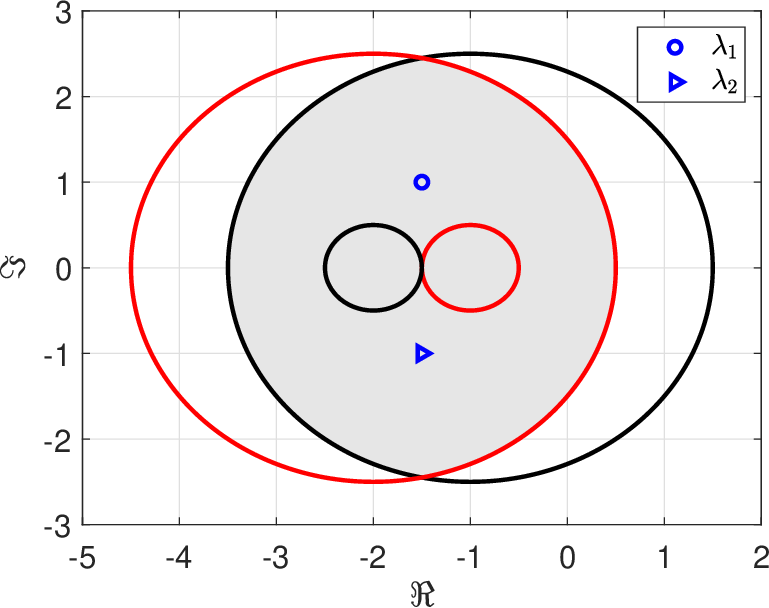}}
\end{minipage}
\hfill
\begin{minipage}[H]{0.44\linewidth}
\center{\includegraphics[width=1\linewidth]{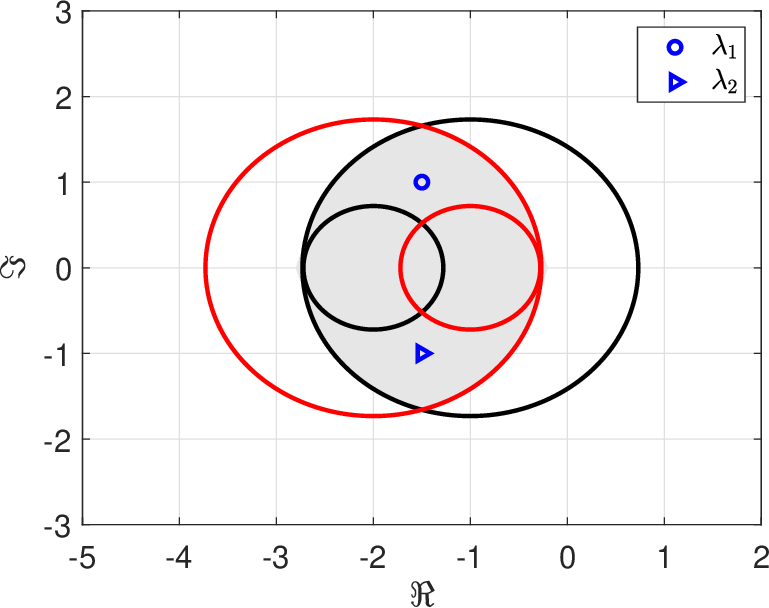}}
\end{minipage}
\vfill
\begin{minipage}[H]{0.44\linewidth}
\center{\includegraphics[width=1\linewidth]{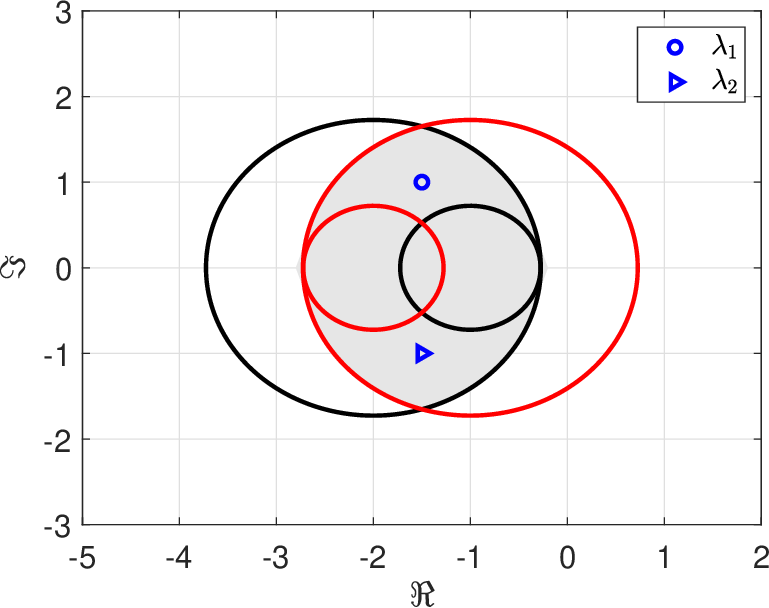}}
\end{minipage}
\hfill
\begin{minipage}[H]{0.44\linewidth}
\center{\includegraphics[width=1\linewidth]{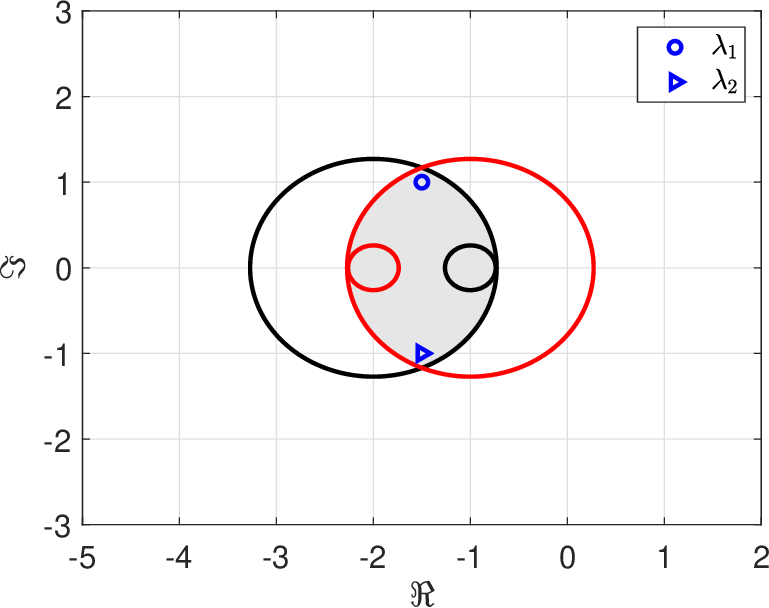}}
\end{minipage}
\caption{Области локализации собственных значений матрицы $Q$ при использовании оценок \eqref{eq_2_estim_2} и \eqref{eq_2_estim_2a}. Результирующая область локализации обозначена серым цветом.}
\label{Ex1}
\end{figure}

Из рис.~\ref{Ex1} можно найти оценки мнимой части, которые отражены в таблице ниже.

\begin{table}[ht]
		\centering
		\begin{tabular}{|c|c|c|c|c|c|c|c|}
			\hline
			Рис.~\ref{Ex1} & Оценка $\mathbb \Im \{\lambda\{Q\}\}$ & Точность, $\%$ \\  
			\hline 
			Сверху слева & 2.3    &    130    \\ 
			\hline 
			Сверху справа & 1.8    &     80    \\ 
			\hline 
			Снизу слева & 1.7    &     70    \\ 
			\hline
			Сверху справа & 1.2    &     20    \\ 
			\hline 
		\end{tabular} 
        \label{table:nonlin2}
        \caption{Оценки мнимой части собственных значений, полученные при использовании лемм \ref{lemma1} и \ref{lemma2}.}
\end{table}	

Наилучшие оценки действительной и мнимой части гарантирует результат из леммы \ref{lemma2}, где одновременно используются варьируемые параметры $D$ и $\alpha$.


%
%


%
%


\subsection{Возмущенные матрицы}
\label{subsec22}

В данном разделе рассмотрим поиск областей локализации собственных чисел для матриц с интервально неопределенными параметрами:
\begin{equation}
\label{eq_2_estim_1d0}
\begin{array}{lll}
&Q(t) = Q_0 + \Delta Q(t) \in \mathbb R^{n \times n},
\\
&Q_0 = (q^0_{ij}), 
&\Delta Q(t) = (\Delta q_{ij}(t)), 
\\
&\Delta \underline{q}_{ii} \leq \Delta q_{ii}(t) \leq \Delta \overline{q}_{ii}, 
&|\Delta q_{ij}(t)| \leq m_{ij}$~\mbox{при}~$i \neq j.  
\end{array}
\end{equation}

Поскольку элементы матрицы могут принимать любые значения из допустимых интервалов, то вместо кругов локализации собственных чисел, рассмотренных в доказательствах лемм \ref{lemma1} и \ref{lemma2}, введем в рассмотрение следующую фигуру.

\begin{definition}
Фигуру, образованную объединением кругов $\mathcal {E C} = \cup_{q \in [\underline{q};\overline{q}]} \{z \in \mathbb C: |z-q| \leq R \}$, назовем $e$-кругом.
\end{definition}

Введем обозначения для оценок сверху сумм по строкам и столбцам абсолютных значений элементов матриц $Q(t)$ и $D^{-1}Q(t)D$, исключая диагональные элементы, в виде
\begin{equation}
\label{eq_2_estim_1d}
\begin{array}{lll}
&\hat{R}_i(Q)=\sum\limits_{j=1, j \neq i}^{n} (|q^0_{ij}|+m_{ij}),
&\hat{C}_j(Q)=\sum\limits_{i=1, i \neq j}^{n} (|q^0_{ij}|+m_{ij}),
\\
&\hat{R}^D_i(Q)=\sum\limits_{j=1, j \neq i}^{n} \frac{d_j}{d_i} (|q^0_{ij}|+m_{ij}),
&\hat{C}^D_j(Q)=\sum\limits_{i=1, i \neq j}^{n} \frac{d_i}{d_j} (|q^0_{ij}|+m_{ij}).
\end{array}
\end{equation}

Теперь рассмотрим обобщение лемм \ref{lemma1} и \ref{lemma2} на случай  матриц с интервально неопределенными элементами. 

\begin{lemma}
\label{lemma3}
Собственные значения матрицы $Q(t)$ из \eqref{eq_2_estim_1d0} находятся в области пересечения $e$-кругов 

\begin{equation}
\label{eq_2_estim_1d}
\begin{array}{lll}
\mathcal{E}\mathcal{C}_{\textup{row}} \cap \mathcal{E}\mathcal{C}_{\textup{col}},
\end{array}
\end{equation}
где
\begin{equation}
\label{eq_2_estim_2d}
\begin{array}{lll}
\mathcal{EC}_{\textup{row}}=\cup_{i=1}^{n} \mathcal{EC}_{\textup{row},i},
\\
\mathcal{EC}_{\textup{row},i}=\cup_{\Delta q_{ii}(t) \in [\underline{q}_{ii};\overline{q}_{ii}]} \{ t \in \mathbb R_{\geq}, \lambda \in \mathbb C: |\lambda(t)-q^0_{ii}-\Delta q_{ii}(t)|
\leq 
\hat{R}^D_i(Q) \},
\end{array}
\end{equation}
\begin{equation}
\label{eq_2_estim_3d}
\begin{array}{lll}
\mathcal{EC}_{\textup{col}}=\cup_{j=1}^{n} \mathcal{EC}_{\textup{col},j},
\\
\mathcal{EC}_{\textup{col},j}=\cup_{\Delta q_{jj}(t) \in [\underline{q}_{jj};\overline{q}_{jj}]} \{ t \in \mathbb R_{\geq}, \lambda \in \mathbb C: |\lambda(t)-q^0_{jj}-\Delta q_{jj}(t)|
\leq 
\hat{C}^D_j(Q) \}.
\end{array}
\end{equation}

\end{lemma}

\begin{proof}
Пусть $\lambda(t)$ -- собственное значение матрицы $Q(t)$ и $s(t)=col\{s_1(t),...,s_n(t)\}$ -- собственный вектор, соответствующий данному собственному значению. 
Выберем $i$-ю компоненту вектора $s(t)$ так, что $\sup\{s_i(t)\} \geq \max\{\sup\{s_1(t)\},...,\sup\{s_{i-1}(t)\},\sup\{s_{i+1}(t)\},...,\sup\{s_n(t)\}\}$. 
Обозначим $\bar{s}_i=\sup\{s_i(t)\}$.
Из соотношения $\lambda(t)s(t)=Q(t)s(t)$ выпишем выражение для $i$-й координаты в виде
$\lambda(t)s_i(t)=\sum\limits_{j=1}^{n} q_{ij}(t) s(t)$ 
или 
$(\lambda(t)-q_{ii}(t))s_i(t)=\sum\limits_{j=1, j \neq i}^{n} q_{ij}(t) s(t)$. 
Воспользовавшись неравенством треугольника, рассмотрим оценку 
\begin{equation}
\label{eq_est_eign1}
\begin{array}{lll}
|\lambda(t)-q_{ii}(t)| |s_i(t)|
=
|\sum\limits_{j=1, j \neq i}^{n} q_{ij}(t) s_j(t)| 
\leq
\\
\leq
\sum\limits_{j=1, j \neq i}^{n} |q_{ij}(t) s_j(t)| 
\leq
\sum\limits_{j=1, j \neq i}^{n} |q_{ij}(t)| |s_j(t)| 
\leq
\bar{s}_i \sum\limits_{j=1, j \neq i}^{n} |q_{ij}(t)|.
\end{array}
\end{equation}

Перепишем выражение \eqref{eq_est_eign1} как 
$|\lambda(t)-q_{ii}(t)| |s_i(t)|-\bar{s}_i \sum\limits_{j=1, j \neq i}^{n} |q_{ij}(t)| \leq 0$ 
или в виде
\begin{equation}
\label{eq_est_eign2}
\begin{array}{lll}
\bar{s}_i \left(|\lambda(t)-q_{ii}(t)| \frac{|s_i(t)|}{\bar{s}_i}
-
 \sum\limits_{j=1, j \neq i}^{n} |q_{ij}(t)| \right)
\leq 0.
\end{array}
\end{equation}
Так как $\frac{|s_i(t)|}{\bar{s}_i} \leq 1$, то выражение \eqref{eq_est_eign2} будет выполнено, если будет выполнено неравенство
\begin{equation}
\label{eq_est_eign3}
\begin{array}{lll}
|\lambda(t)-q_{ii}(t)|
\leq
 \sum\limits_{j=1, j \neq i}^{n} |q_{ij}(t)|.
\end{array}
\end{equation}
Так как
$\Delta \underline{q}_{ii} \leq \Delta q_{ii}(t) \leq \Delta \overline{q}_{ii}$ и $|\Delta q_{ij}(t)| \leq m_{ij}$ при $i \neq j$, то перепишем неравенство \eqref{eq_est_eign3} в виде $e$-круга $\mathcal{EC}_{\textup{row},i}$ из \eqref{eq_2_estim_2d}.

Соотношение \eqref{eq_2_estim_2d} выполнено для некоторого $i$.
Поскольку неизвестно, какое $i$ соответствует данному $\lambda(t)$, то можно лишь сказать, что $\lambda(t)$ принадлежит объединению $e$-кругов 
$\mathcal{EC}_{\textup{row}}=\cup_{i=1}^{n} \mathcal{EC}_{\textup{row},i}$.
Значит, все собственные числа матрицы $Q(t)$ находятся в объединении $e$-кругов $\mathcal{EC}_{\textup{row}}$.

Поскольку матрица $Q^{\rm T}(t)$ имеет те же собственные значения, что и матрица $Q(t)$, то все собственные значения матрицы $Q(t)$ заключены в объединении $e$-кругов $\mathcal{EC}_{\textup{col}}=\cup_{j=1}^{n}\mathcal{EC}_{\textup{col},j}$, см. \eqref{eq_2_estim_3d}. 
Дальнейшие рассуждения для матрицы $Q^{\rm T}(t)$ аналогичны рассуждениям для матрицы $Q(t)$. 
Так как собственные значения матрицы $Q(t)$ находятся одновременно в $\mathcal{EC}_{\textup{row}}$ и $\mathcal{EC}_{\textup{col}}$, значит, они находятся в области \eqref{eq_2_estim_1d}. 
$\blacksquare$
\end{proof}


\begin{lemma}
\label{lemma4}
Пусть заданы $d_i>0$, $i=1,...,n$. 
Собственные значения матрицы $Q(t)$ из \eqref{eq_2_estim_1d0} находятся в области пересечения $e$-кругов 

\begin{equation}
\label{eq_2_estim_1d1}
\begin{array}{lll}
\mathcal{E}\mathcal{C}^D_{\textup{row}} \cap \mathcal{E}\mathcal{C}^D_{\textup{col}},
\end{array}
\end{equation}
где
\begin{equation}
\label{eq_2_estim_2d1}
\begin{array}{lll}
\mathcal{EC}^D_{\textup{row}}=\cup_{i=1}^{n} \mathcal{EC}^D_{\textup{row},i},
\\
\mathcal{EC}^D_{\textup{row},i}=\cup_{\Delta q_{ii}(t) \in [\underline{q}_{ii};\overline{q}_{ii}]} \{ t \in \mathbb R_{\geq}, \lambda \in \mathbb C: |\lambda(t)-q^0_{ii}-\Delta q_{ii}(t)|
\leq 
\hat{R}^D_i(Q) \},
\end{array}
\end{equation}
\begin{equation}
\label{eq_2_estim_3d1}
\begin{array}{lll}
\mathcal{EC}^D_{\textup{col}}=\cup_{j=1}^{n} \mathcal{EC}^D_{\textup{col},j},
\\
\mathcal{EC}^D_{\textup{col},j}=\cup_{\Delta q_{jj}(t) \in [\underline{q}_{jj};\overline{q}_{jj}]} \{ t \in \mathbb R_{\geq}, \lambda \in \mathbb C: |\lambda(t)-q^0_{jj}-\Delta q_{jj}(t)|
\leq 
\hat{C}^D_j(Q) \}.
\end{array}
\end{equation}

\end{lemma}


\begin{proof}
Результаты леммы \ref{lemma4} следует из леммы \ref{lemma3} и того факта, что собственные значения матриц $D^{-1}Q(t)D$ и $Q(t)$ одни и те же. 
$\blacksquare$
\end{proof}


\begin{lemma}
\label{lemma5}
Пусть заданы $d_i>0$, $i=1,...,n$ и $\alpha \in [0; 1]$. 
Собственные значения матрицы $Q(t)$ из \eqref{eq_2_estim_1d0} находятся в области пересечения $e$-кругов 

\begin{equation}
\label{eq_2_estim_1}
\begin{array}{lll}
\mathcal{EC}^{D,\alpha}=\cup_{i=1}^{n} \mathcal{EC}^{D,\alpha}_{i},
\\
\end{array}
\end{equation}
где
\begin{equation*}
\begin{array}{lll}
\mathcal{EC}^{D,\alpha}_{i}=\cup_{\Delta q_{ii}(t) \in [\underline{q}_{ii};\overline{q}_{ii}]} \{ t \in \mathbb R_{\geq}, \lambda \in \mathbb C: |\lambda(t)-q_{ii}-\Delta q_{ii}(t)|
\leq
[\hat{R}^D_i(Q)]^{\alpha} [\hat{C}^D_i(Q)]^{1-\alpha}.
\end{array}
\end{equation*}

\end{lemma}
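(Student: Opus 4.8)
The plan is to follow the same reduction that turned Lemma~\ref{lemma3} into Lemma~\ref{lemma4}: fix $t\in\mathbb R_{\ge}$, treat $Q(t)$ as an ordinary real $n\times n$ matrix, and apply the Ostrowski-type estimate of Lemma~\ref{lemma2} (equivalently, Ostrowski's disc theorem of \cite{Khorn89} applied to $D^{-1}Q(t)D$, whose spectrum coincides with that of $Q(t)$) with the given weights $d_1,\dots,d_n$ and the given $\alpha$. This immediately yields that every eigenvalue $\lambda(t)$ of $Q(t)$ satisfies
\[
|\lambda(t)-q_{ii}(t)|\ \leq\ [R^D_i(Q(t))]^{\alpha}\,[C^D_i(Q(t))]^{1-\alpha}
\]
for at least one index $i=i(t)\in\{1,\dots,n\}$. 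If instead a self-contained argument is preferred, one repeats the eigenvector computation \eqref{eq_est_eign1}--\eqref{eq_est_eign3} from the proof of Lemma~\ref{lemma3}, now writing each off-diagonal weighted entry of $D^{-1}Q(t)D$ as $\bigl(\tfrac{d_j}{d_i}|q_{ij}(t)|\bigr)^{\alpha}\bigl(\tfrac{d_j}{d_i}|q_{ij}(t)|\bigr)^{1-\alpha}$ and applying H\"older's inequality with exponents $1/\alpha$ and $1/(1-\alpha)$, which is precisely what converts the separate row and column bounds into the interpolated one.

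The second step is to replace the frozen-time quantities by their interval counterparts. Decompose the diagonal entry as $q_{ii}(t)=q^0_{ii}+\Delta q_{ii}(t)$ with $\Delta q_{ii}(t)\in[\underline q_{ii};\overline q_{ii}]$, and bound the deleted row and column sums termwise by the triangle inequality $|q^0_{ij}+\Delta q_{ij}(t)|\leq |q^0_{ij}|+m_{ij}$, giving $R^D_i(Q(t))\leq\hat R^D_i(Q)$ and $C^D_i(Q(t))\leq\hat C^D_i(Q)$ for all $t$. Since for $\alpha\in[0;1]$ the weighted geometric mean $(x,y)\mapsto x^{\alpha}y^{1-\alpha}$ is nondecreasing in each of $x,y\geq 0$ (with the convention $0^0=1$, which also makes the endpoints $\alpha=1$ and $\alpha=0$ collapse to the row- and column-$e$-discs of Lemma~\ref{lemma4}), it follows that
\[
|\lambda(t)-q^0_{ii}-\Delta q_{ii}(t)|\ \leq\ [\hat R^D_i(Q)]^{\alpha}\,[\hat C^D_i(Q)]^{1-\alpha},
\]
that is, $\lambda(t)\in\mathcal{EC}^{D,\alpha}_{i}$ after taking the union over $\Delta q_{ii}(t)\in[\underline q_{ii};\overline q_{ii}]$.

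It remains to assemble the unions. The active index $i=i(t)$ may vary with $t$ but always lies in $\{1,\dots,n\}$, so the union over $i$ absorbs this dependence; taking the further union over $t\in\mathbb R_{\ge}$ and over all admissible perturbations $\Delta Q(t)$ of \eqref{eq_2_estim_1d0} shows that every eigenvalue of every such $Q(t)$ belongs to $\mathcal{EC}^{D,\alpha}=\cup_{i=1}^{n}\mathcal{EC}^{D,\alpha}_{i}$, which is the assertion.

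I expect the only genuinely delicate point to be the H\"older step, and only if one declines to quote Lemma~\ref{lemma2}/\cite{Khorn89}: one must pick the index realizing the maximal eigenvector component exactly as in the proof of Lemma~\ref{lemma3}, split each off-diagonal entry into its $\alpha$- and $(1-\alpha)$-powers, estimate the resulting mixed sum by a product of a plain weighted row sum and a weighted sum that is again controlled by the largest component, and then run the symmetric argument on $Q^{\rm T}(t)$ and intersect. Everything else --- the termwise triangle inequality, monotonicity of the weighted geometric mean, and the bookkeeping of the unions --- is routine, and the degenerate cases (a vanishing row or column sum, or $\alpha\in\{0,1\}$) are covered by the conventional reading $0^0=1$.
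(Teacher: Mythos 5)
Your proof is correct and follows essentially the same route as the paper, which disposes of Lemma~\ref{lemma5} in one line by combining Lemma~\ref{lemma2} (the Ostrowski-type interpolated discs), the interval bounds from the proof of Lemma~\ref{lemma3}, and the invariance of the spectrum under the similarity $Q(t)\mapsto D^{-1}Q(t)D$; you merely spell out the termwise triangle-inequality and monotonicity steps that the paper leaves implicit. The only caution is your parenthetical sketch of a self-contained H\"older argument: Ostrowski's theorem is not obtained by intersecting the row and column Gershgorin regions, so if you ever wrote that variant out you would need the genuine diagonal-dominance/H\"older proof rather than an intersection --- but since your main argument quotes Lemma~\ref{lemma2}, this does not affect the validity of the proposal.
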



\begin{proof}
Доказательство леммы \ref{lemma5} следует из доказательств лемм \ref{lemma2} и \ref{lemma3} и того факта, что собственные значения матриц $D^{-1}Q(t)D$ и $Q(t)$ одинаковые. 
$\blacksquare$
\end{proof}


\begin{corollary}
\label{corollary_2}
Аналогично леммам \ref{lemma1} и \ref{lemma2} можно выписать оценки на максимальные и минимальные значения собственных значений матрицы $Q(t)$ с использованием результатов лемм \ref{lemma3}-\ref{lemma5}. 
То есть существуют числа $d_i>0$, $i=1,...,n$ и $\alpha \in [0;1]$ такие, что справедливы следующие оценки
\begin{equation}
\label{eq_2_estim_1_d_est}
\begin{array}{lll}
\max\limits_{i} \{\sup\limits_{t}\{\Re \{\lambda_i\{Q(t)\}\} \} \} \leq \sigma^D_{\max} \{Q(t)\} \leq \sigma_{\max} \{Q(t)\} ,
\\
\min\limits_{i} \{\sup\limits_{t}\{\Re \{\lambda_i\{Q(t)\} \}\} \} \geq \sigma^D_{\min} \{Q(t)\} \geq  \sigma_{\min} \{Q(t)\},
\\
\\
\max\limits_{i} \{\sup\limits_{t}\{\Re \{\lambda_i\{Q(t)\} \} \} \} \leq \sigma^{D,\alpha}_{\max} \{Q(t)\} \leq \sigma_{\max}^{\alpha} \{Q(t)\} ,
\\
\min\limits_{i} \{\sup\limits_{t}\{ \Re \{\lambda_i\{Q(t)\} \} \} \} \geq \sigma^{D,\alpha}_{\min} \{Q(t)\} \geq  \sigma_{\min}^{\alpha} \{Q(t)\},
\end{array}
\end{equation}
где
\begin{equation}
\label{eq_2_estim_2}
\begin{array}{lll}
\sigma_{\max}(Q(t)) = \min \left\{ \max\limits_{i}\{q^0_{ii} + \Delta \overline{q}_{ii} + \hat{R}_i(Q) \}, 
\max\limits_{j}\{q^0_{jj} + \Delta \overline{q}_{jj} + \hat{C}_j(Q) \} \right\},
\\
\sigma_{\min}(Q(t)) = \max\{ \min\limits_{i}\{q^0_{ii} - \Delta \overline{q}_{ii} - \hat{R}_i(Q) \}, \min\limits_{j}\{q^0_{jj} - \Delta \overline{q}_{jj} - \hat{C}_j(Q) \} \},
\\
\sigma^D_{\max}(Q(t)) = \min\{ \max\limits_{i}\{q^0_{ii} + \Delta \overline{q}_{ii} + \hat{R}^D_i(Q) \}, 
\max\limits_{j}\{q^0_{jj} + \Delta \overline{q}_{jj} + \hat{C}^D_j(Q) \} \},
\\
\sigma^D_{\min}(Q(t)) =\max \{ \min\limits_{i}\{q^0_{ii} - \Delta \overline{q}_{ii} - \hat{R}^D_i(Q) \}, 
\min\limits_{j}\{q^0_{jj} - \Delta \overline{q}_{jj} - \hat{C}^D_j(Q) \} \},
\\
\\
\sigma_{\max}^{\alpha}(Q(t)) = \max\limits_{i,\alpha}\{q^0_{ii} + \Delta \overline{q}_{ii} + [\hat{R}_i(Q)]^{\alpha} [\hat{C}_i(Q)]^{1-\alpha} \},
\\
\sigma_{\min}^{\alpha}(Q(t)) = \min\limits_{i,\alpha}\{q^0_{ii} - \Delta \overline{q}_{ii} - [\hat{R}_i(Q)]^{\alpha} [\hat{C}_i(Q)]^{1-\alpha} \},
\\
\sigma^{D,\alpha}_{\max}(Q(t)) = \max\limits_{i}\{q^0_{ii} + \Delta \overline{q}_{ii} + [\hat{R}_i^D(Q)]^{\alpha} [\hat{C}_i^D(Q)]^{1-\alpha} \},
\\
\sigma^{D,\alpha}_{\min}(Q(t)) = \min\limits_{i}\{q^0_{ii} - \Delta \overline{q}_{ii} - [\hat{R}_i^D(Q)]^{\alpha}  [\hat{C}_i^D(Q)]^{1-\alpha}\}.
\end{array}
\end{equation}
\end{corollary}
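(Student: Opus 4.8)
The plan is to derive Corollary~\ref{corollary_2} directly from the localization results of Lemmas~\ref{lemma3}--\ref{lemma5} together with one elementary fact of plane geometry: for an ordinary disc $\{z\in\mathbb C:|z-c|\le r\}$ one has $\Re\{c\}-r\le\Re\{z\}\le\Re\{c\}+r$, with both extremes attained. Each $e$-disc occurring in Lemmas~\ref{lemma3}--\ref{lemma5} is, by its very definition, the union of such ordinary discs taken over $\Delta q_{ii}(t)$ ranging in the interval prescribed by \eqref{eq_2_estim_1d0}; hence the supremum of $\Re\{z\}$ over the whole $e$-disc equals the ``center-plus-radius'' quantity evaluated at the right endpoint of that interval, and the infimum equals the ``center-minus-radius'' quantity at the left endpoint. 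Reading off these extremes from each $e$-disc furnished by the lemmas and then taking the supremum over $t$ is all that the corollary requires.

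I would first treat the $D$-estimates. Fix $t\ge 0$ and let $\lambda(t)$ be an arbitrary eigenvalue of $Q(t)$. Lemma~\ref{lemma4} places $\lambda(t)$ in $\mathcal{EC}^D_{\textup{row}}\cap\mathcal{EC}^D_{\textup{col}}$. Membership in $\mathcal{EC}^D_{\textup{row}}=\cup_{i}\mathcal{EC}^D_{\textup{row},i}$ produces an index $i$ and a value $\Delta q_{ii}(t)$ in its interval with $|\lambda(t)-q^0_{ii}-\Delta q_{ii}(t)|\le\hat{R}^D_i(Q)$, whence
\[
\Re\{\lambda(t)\}\le q^0_{ii}+\Delta\overline{q}_{ii}+\hat{R}^D_i(Q)\le\max_{k}\bigl\{q^0_{kk}+\Delta\overline{q}_{kk}+\hat{R}^D_k(Q)\bigr\}.
\]
The same argument applied to $\mathcal{EC}^D_{\textup{col}}$ gives the companion bound with $\hat{C}^D_j(Q)$ and the column maximum, and since both hold simultaneously we may take their minimum, which is precisely $\sigma^D_{\max}(Q(t))$ of \eqref{eq_2_estim_2}. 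As neither the value of this bound nor its derivation depended on the choice of eigenvalue, passing to the supremum over $t$ and the maximum over the eigenvalue index yields the upper $D$-estimate in \eqref{eq_2_estim_1_d_est}; the lower one follows from the mirror computation using $\Re\{\lambda(t)\}\ge q^0_{ii}+\Delta\underline{q}_{ii}-\hat{R}^D_i(Q)$. The outer inequalities $\sigma^D_{\max}(Q(t))\le\sigma_{\max}(Q(t))$ and $\sigma^D_{\min}(Q(t))\ge\sigma_{\min}(Q(t))$ are the content of the ``there exist $d_i$'' clause of the statement: choosing $d_i\equiv 1$ makes $\hat{R}^D_i=\hat{R}_i$ and $\hat{C}^D_j=\hat{C}_j$, turning these into equalities, so admissible $d_i$ do exist, and a sharper choice only tightens the refined bound.

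I would then repeat this verbatim for the $\alpha$-estimates, invoking Lemma~\ref{lemma5}: each eigenvalue of $Q(t)$ lies in some $\mathcal{EC}^{D,\alpha}_{i}$, which is the union over $\Delta q_{ii}(t)$ of discs of radius $[\hat{R}^D_i(Q)]^{\alpha}[\hat{C}^D_i(Q)]^{1-\alpha}$, so that $\Re\{\lambda(t)\}\le q^0_{ii}+\Delta\overline{q}_{ii}+[\hat{R}^D_i(Q)]^{\alpha}[\hat{C}^D_i(Q)]^{1-\alpha}$ and the maximum over $i$ reproduces $\sigma^{D,\alpha}_{\max}(Q(t))$. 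The comparison with $\sigma^{\alpha}_{\max}(Q(t))$ again follows from $d_i\equiv 1$, while the comparison of $\sigma^{\alpha}_{\max}$ with the ungraded $\sigma_{\max}$ uses that the maximum over $\alpha\in[0,1]$ built into the definition of $\sigma^{\alpha}_{\max}$ is attained and, at $\alpha=1$, recovers the row quantity. The lower $\alpha$-estimates are symmetric, and the suprema/maxima over $t$ and the eigenvalue index are taken exactly as in the $D$-case.

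The step I expect to require the most care is the bookkeeping between the two shapes of localization: Lemmas~\ref{lemma3}--\ref{lemma4} supply an \emph{intersection} of a row family and a column family of $e$-discs, whereas Lemma~\ref{lemma5} supplies a single family $\mathcal{EC}^{D,\alpha}$ built from the mixed radius $[\hat{R}^D_i]^{\alpha}[\hat{C}^D_i]^{1-\alpha}$. One must verify that the $\alpha$-lines of \eqref{eq_2_estim_2} really are the real-part extremes of the set actually localized by Lemma~\ref{lemma5}, i.e.\ that no extra row/column minimization beyond what that lemma delivers has been tacitly used. Two minor editorial points should also be settled along the way: the lower lines of \eqref{eq_2_estim_2} are written with $-\Delta\overline{q}_{ii}$, which matches the disc estimate only when the diagonal perturbation interval is symmetric ($\Delta\underline{q}_{ii}=-\Delta\overline{q}_{ii}$) and should otherwise read $+\Delta\underline{q}_{ii}$; and the lower lines of \eqref{eq_2_estim_1_d_est} should involve $\inf_{t}$ rather than $\sup_{t}$, which the proof silently corrects. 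Beyond these, the argument is a routine transcription of the disc estimate through the suprema in $t$.
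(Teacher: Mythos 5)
Your argument is correct and coincides with what the paper intends: the paper gives no separate proof of Corollary~\ref{corollary_2}, treating it as an immediate consequence of Lemmas~\ref{lemma3}--\ref{lemma5}, and your reading-off of the real-part extremes of each $e$-disc (center at the endpoint of the diagonal-perturbation interval plus/minus the Gershgorin radius, then min over the row/column families and sup over $t$) is exactly that omitted computation. Your two editorial observations --- that the lower lines of \eqref{eq_2_estim_2} should carry $+\Delta\underline{q}_{ii}$ rather than $-\Delta\overline{q}_{ii}$ unless the interval is symmetric, and that the lower bounds are more naturally stated with $\inf_t$ --- are both well taken and do not affect the validity of the argument.
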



\textit{Пример 2.} 
Рассмотрим две параметрически неопределенные матрицы $Q$ с постоянными и переменными параметрами в виде 
\begin{equation*}
\begin{array}{lll}
Q=\begin{bmatrix}
-1 & 0\\
0 & -1.5
\end{bmatrix}
+
\underbrace{\begin{bmatrix}
r_{11} & 2r_{12}\\
3r_{21} & 4r_{22}
\end{bmatrix}}_{\Delta Q},
\\
Q(t)=\begin{bmatrix}
-1 & 0\\
0 & -1.5
\end{bmatrix}
+
\underbrace{\begin{bmatrix}
\sin (t) & 2 \cos (1.5t)\\
3 \textup{sign}(\sin(2t)) & 4 \textup{sign}(\cos(1.7t))
\end{bmatrix}}_{\Delta Q(t)},
\end{array}
\end{equation*}
где $r_{ij}$, $i,j=1,2$ -- псевдослучайные числа равномерно распределенные на интервале $(-1;1)$. 
Рассмотрим $200$ реализаций для каждого $r_{ij}$. 
Матрицы $\Delta Q$ и $\Delta Q(t)$ имеют одинаковые $m_{ij}$, поэтому и оценки области локализации будут одинаковыми.

На рис.~\ref{Ex2a} представлена область локализации собственных значений $Q$ и $Q(t)$ с использованием результатов лемм \ref{lemma3}-\ref{lemma5} (серая область), где кружками и треугольниками изображены собственные значения матрицы $Q$ с постоянными параметрами, а кривыми линиями -- собственные значения матрицы $Q$ с нестационарными параметрами. 
На трех рисунках из четырех пары $e$-кругов совпали за счет варьирования $d_i$ и $\alpha$, поэтому на трех рисунках указаны только два $e$-круга.

\begin{figure}[H]
\begin{minipage}[H]{0.44\linewidth}
\center{\includegraphics[width=1\linewidth]{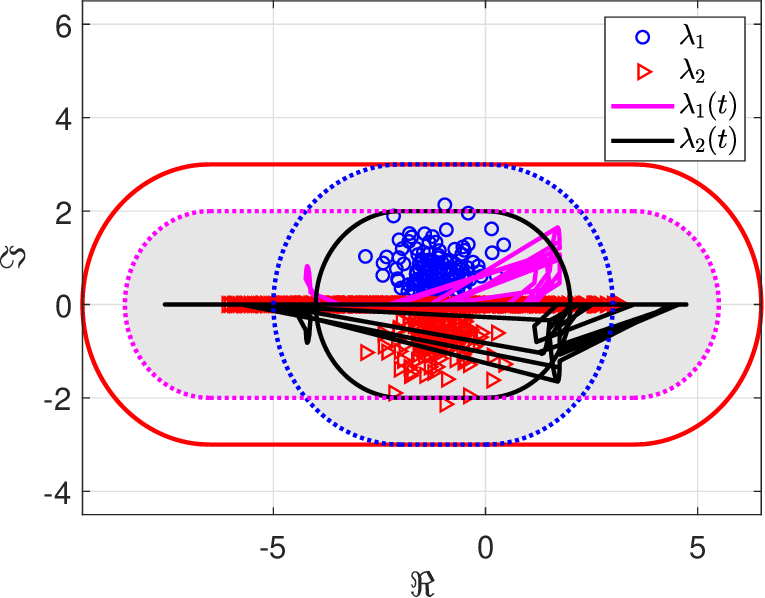}}
\end{minipage}
\hfill
\begin{minipage}[H]{0.44\linewidth}
\center{\includegraphics[width=1\linewidth]{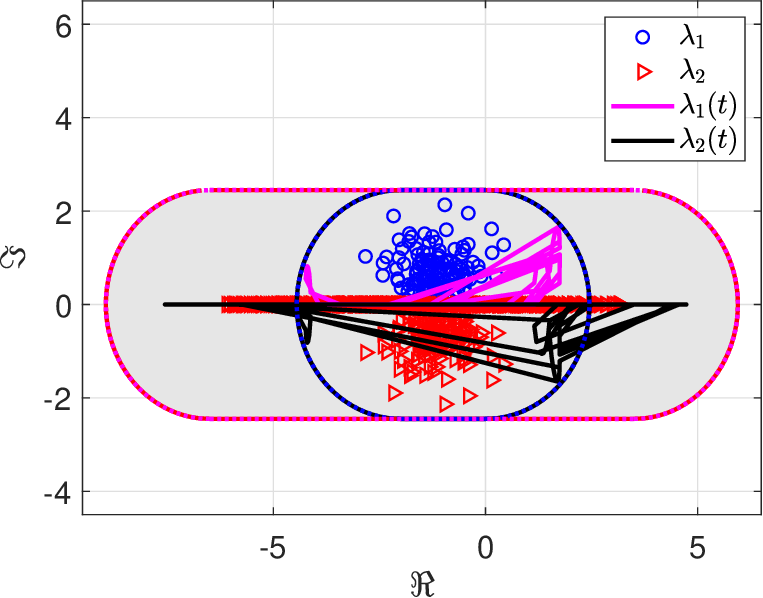}}
\end{minipage}
\vfill
\begin{minipage}[H]{0.44\linewidth}
\center{\includegraphics[width=1\linewidth]{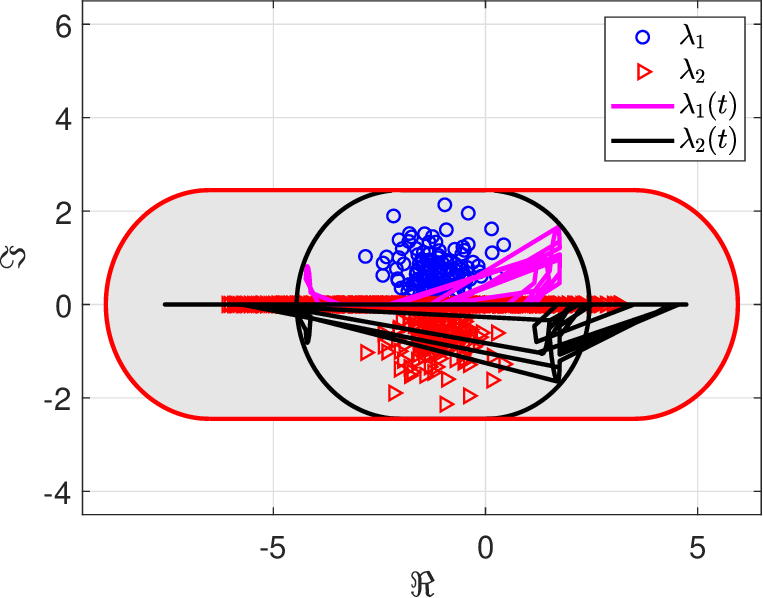}}
\end{minipage}
\hfill
\begin{minipage}[H]{0.44\linewidth}
\center{\includegraphics[width=1\linewidth]{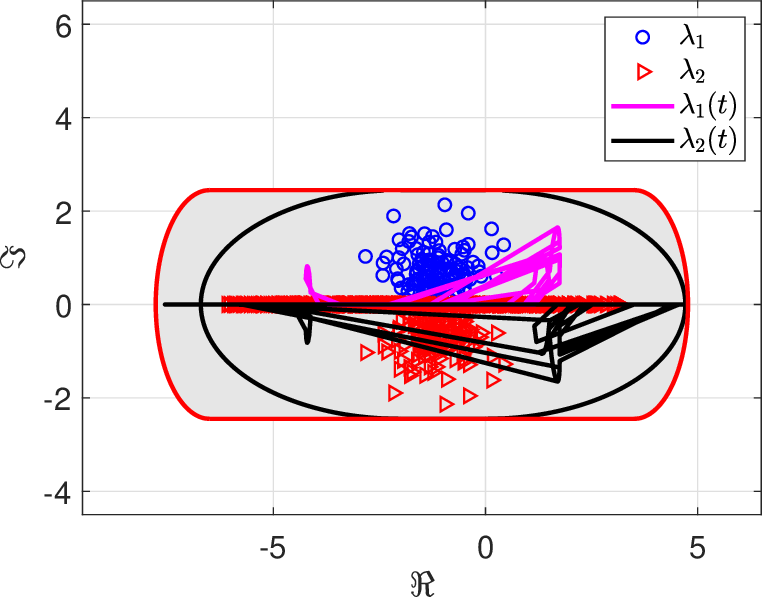}}
\end{minipage}
\caption{Области локализации собственных значений возмущенных матриц $Q$ и $Q(t)$.}
\label{Ex2a}
\end{figure}


%
%


\section{Анализ устойчивости систем управления}

В данном разделе рассмотрим несколько применений результатов предыдущего раздела для анализа и синтеза систем управления.

\subsection{Синхронизация сетевых систем}

Рассмотрим сетевую систему, состоящую из $n$ взаимосвязанных агентов вида
\begin{equation}
\label{eq_2_Ex_4_1}
\begin{array}{lll}
\dot{x}_i = \sum_{j=1}^{n} q_{ij} x_j + u_i,~~~i=1,...,n,
\end{array}
\end{equation}
где $x_i \in \mathbb R$, 
$u_i \in \mathbb R$ -- сигнал управления, 
$|q_{ij}| \leq m_{ij}$. 
Требуется обеспечить выполнение условия $\lim\limits_{t \to \infty} x_i(t)=0$ для всех $x_i$ за счет соответствующего выбора $u_i$, $i=1,...,n$. 

Зададим закон управления
\begin{equation}
\label{eq_2_Ex_4_u_i}
\begin{array}{lll}
u_i = - q x_i, ~~~ i=1,...,n,
\end{array}
\end{equation}
где $q>0$. 

Введем обозначения: $x=col\{x_1,...,x_n\}$, $Q_0=-qI$, $\Delta Q = (q_{ij})$ и $Q=Q_0+\Delta Q$. 
Тогда \eqref{eq_2_Ex_4_1} и \eqref{eq_2_Ex_4_u_i} можно переписать в виде
\begin{equation}
\label{eq_2_Ex_4_CLS}
\begin{array}{lll}
\dot{x} = Q x.
\end{array}
\end{equation}
В результате проверка условия $\lim\limits_{t \to \infty} x_i(t)=0$ сводится к проверке устойчивости матрицы $Q_0+\Delta Q$ которая может быть обеспечена за счет соответствующего выбора $q$ в \eqref{eq_2_Ex_4_u_i}.

Для иллюстрации сказанного, пусть $q=-10$ и $q_{ij}$ -- псевдослучайные числа равномерно распределенные на интервале $(-1;1)$. 
Для анализа устойчивости матрицы $Q_0+\Delta Q$ воспользуемся:
\begin{itemize}
\item функциями eig (расчет собственных значений матрицы) и lyap (решение уравнения Ляпунова) в MatLab, считая $q_{ij}$ известными;
\item приложениями CVX и Yalmip/Sedumi для расчета линейных матричных неравенств, считая $q_{ij}$ известными;
\item леммами \ref{lemma1} и \ref{lemma2}, считая $q_{ij}$ известными;
\item следствием \ref{corollary_2}, считая $q_{ij}$ неизвестными, но с известными $m_{ij}$.
\end{itemize}

На рис.~\ref{Ex3_fig} представлен график затрачиваемого времени на выполнение операции по определению устойчивости $Q_0+\Delta Q$ в зависимости от размерности данной матрицы (числа агентов в сети) и использовании соответствующего метода. 
Расчеты проводились в Matlab R2021b. 
Отметим, что при анализе предложенных результатов не было достигнуто предельное значение времени расчета из-за того, что Matlab R2021b не формировал матрицу размерности больше, чем $25000$.

\begin{figure}[h]
\center{\includegraphics[width=0.7\linewidth]{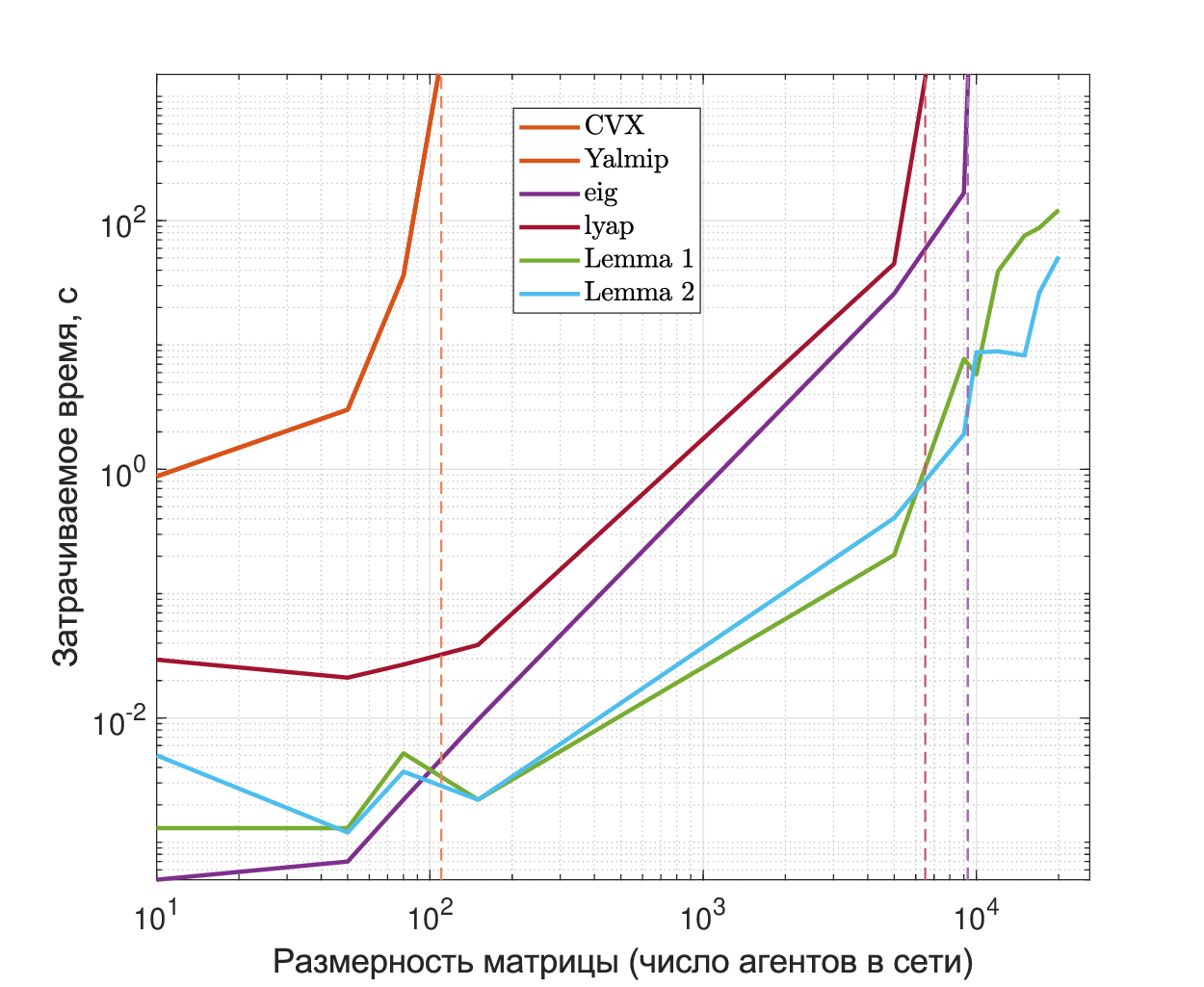}}
\caption{Зависимость затрачиваемого времени на выяснения устойчивости системы от числа агентов в сети.}
\label{Ex3_fig}
\end{figure}

%

Выводы:
\begin{itemize}
\item затрачиваемое время на расчет с использованием CVX и Yalmip существенно возрастает (в меньшей степени при использовании eig и lyap) при повышении количества агентов в сети, в то время как предложенные результаты наименее затратны по времени расчета;
\item алгоритмы eig, lyap, CVX и Yalmip/Sedumi обеспечивают высокую точность расчета по сравнению с предложенными результатами.
\end{itemize}

Замкнутая система \eqref{eq_2_Ex_4_CLS} содержит матрицу с диагональным преобладанием. 
В статьях \cite{Polyak02a,Polyak02b,Polyak04,Uronen72,Soloviev83,Curran09,Vijay15,Li19,Pachauri14,Xie22,Adom23,Kazakova98,Vijay14}, где также использовались матрицы с диагональным преобладанием, отмечалось, что это достаточно узкий класс исследуемых объектов. 

В следующих разделах покажем, что предложенные результаты можно применять к системам с матрицами без диагонального преобладания. 
Диагональное преобладание будет предъявляться к выражениям, полученным с использованием аппарата функций Ляпунова.


\subsection{Анализ устойчивости линейных систем с интервально неопределенными параметрами и матрицами без диагонального преобладания}

В данном разделе рассмотрим модификацию теоремы Демидовича \cite{Afanasiev03}(теорема 6.1), \cite{Khalil09} (в литературе также используется термин <<условие Демидовича>>) об исследовании устойчивости линейных систем с известными нестационарными параметрами на случай интервальной неопределенности и наличия внешних возмущений.
Пусть исследуемый объект представлен следующим уравнением
\begin{equation}
\label{eq_2_1}
\begin{array}{lll}
\dot{x}(t)=A(t)x(t)+F(t)f(t),
\end{array}
\end{equation}
где 
$t \geq 0$, 
$x \in \mathbb R^n$ -- вектор состояния, 
$f \in \mathbb R^l$ -- внешний сигнал такой, что $\sup\{|f(t)|\} \leq \bar{f}$, 
$F(t) \in \mathbb R^{n \times l}$ и $A(t)=(a_{ij}(t)) \in \mathbb R^{n \times n}$ такие, что 
$\sup\{\|F(t)\|\} \leq \bar{F}$, 
$A(t) = A_0 + \Delta A(t)$, 
$A_0 = (a^0_{ij})$, 
$\Delta A(t) = (\Delta a_{ij}(t))$, 
$\Delta \underline{a}_{ii} \leq \Delta a_{ii}(t) \leq \Delta \overline{a}_{ii}$, 
$|\Delta a_{ij}(t)| \leq m_{ij}$ при $i \neq j$.

Введем матрицу $\bar{A}(t)$, где
\begin{equation}
\label{eq_2_estim_1_upr_1}
\begin{array}{lll}
\bar{A}(t) = A(t)+A^{\rm T}(t) = \bar{A}_0+\Delta \bar{A}(t),
\\
\bar{A}_0 = (\bar{a}^0_{ij})=(a^0_{ij}+a^0_{ji}), 
\\
\Delta \bar{A}(t) = (\Delta \bar{a}_{ij}(t))=(\Delta {a}_{ij}(t)+\Delta {a}_{ji}(t)), 
\\
2\Delta \underline{a}_{ii} \leq \Delta \bar{a}_{ii}(t) \leq 2\Delta \overline{a}_{ii}, 
\\
|\Delta \bar{a}_{ij}(t)| \leq m_{ij}+m_{ji}~~~\mbox{при}~~~i \neq j.
\end{array}
\end{equation}


Отметим, что система \eqref{eq_2_1} содержит матрицу $A(t)$ без диагонального преобладания. 
Как будет показано в теореме ниже, диагональное преобладание понадобится в матрице $\bar{A}(t)$.

Согласно теореме Демидовича \cite{Afanasiev03}(теорема 6.1), \cite{Khalil09}, система \eqref{eq_2_1} асимптотически устойчива при $f(t) \equiv 0$ и с известной матрицей $A(t)$, если собственные значения матрицы $A(t)+A^{\rm T}(t)$ принимают отрицательные значения для всех $t$. 
Далее рассмотрим обобщение данной теоремы на интервально неопределенные матрицы с учетом следствия \ref{corollary_2}.

\begin{theorem}
\label{Th_1_1} 
Обозначим $\sigma$ любую из оценок сверху, вычисленную с помощью \eqref{eq_2_estim_1_d_est} для собственных значений матрицы $\bar{A}(t)$ в \eqref{eq_2_estim_1_upr_1}. 
Если $\sigma<0$, то справедлива следующая оценка
\begin{equation}
\label{eq_2_estim_3}
\begin{array}{lll}
|x(t)| \leq \frac{2 \|\bar{F}\| \bar{f}}{\sigma}+\mathcal{C} e^{0.5 \sigma t},
\end{array}
\end{equation}
где $\mathcal{C}=\max \left\{ 0, |x(0)|+\frac{2 \|\bar{F}\| \bar{f}}{\sigma} \right\}$.
\end{theorem}


\begin{proof}
Выберем функцию Ляпунова
\begin{equation}
\label{eq_2_V}
\begin{array}{lll}
V=x^{\rm T}x
\end{array}
\end{equation}
и найдем производную от нее вдоль решений \eqref{eq_2_1} в виде
\begin{equation}
\label{eq_2_V_dot}
\begin{array}{lll}
\dot{V}=x^{\rm T}\bar{A}(t)x + 2x^{\rm T}F(t)f.
\end{array}
\end{equation}

Найдем оценку сверху: 
\begin{equation}
\label{eq_2_V_dot}
\begin{array}{lll}
\dot{V} \leq
\sigma x^{\rm T}x + 2 |x| \|F(t)\| |f|
\leq 
\sigma V + 2 \sqrt{V} \|\bar{F}\| \bar{f}.
\end{array}
\end{equation}

Решим неравенство \eqref{eq_2_V_dot} в виде
\begin{equation}
\label{eq_2_V_solut1}
\begin{array}{lll}
\sqrt{V} \leq \frac{2 \|\bar{F}\| \bar{f}}{\sigma}+\left(\sqrt{V(0)}+\frac{2 \|\bar{F}\| \bar{f}}{\sigma} \right) e^{0.5 \sigma t}.
\end{array}
\end{equation}

Учитывая \eqref{eq_2_V}, получим
\begin{equation}
\label{eq_2_V_solut2}
\begin{array}{lll}
|x(t)| \leq \frac{2 \|\bar{F}\| \bar{f}}{\sigma}+\left(|x(0)|+\frac{2 \|\bar{F}\| \bar{f}}{\sigma} \right) e^{0.5 \sigma t}.
\end{array}
\end{equation}
Из выражения \eqref{eq_2_V_solut2} следуют результаты \eqref{eq_2_estim_3}.
$\blacksquare$
\end{proof}


\textit{Пример 3. Система с постоянными параметрами с матрицей без диагонального преобладания.} 
Рассмотрим систему \eqref{eq_2_1} с параметрами 
$A=\begin{bmatrix}
-1 & 3\\
-2.5 & -2
\end{bmatrix}$, 
$B=[0~0.05]^{\rm T}$ и $u=\sin(t)$. 
Матрица $A$ не является сверхустойчивой \cite{Polyak02a,Polyak02b,Polyak04} или с диагональным преобладанием \cite{Khorn89,Uronen72,Soloviev83,Curran09,Vijay15,Li19,Pachauri14,Xie22,Adom23,Kazakova98,Vijay14} не по строкам ни по столбцам. 
Также не существует $d_1>0$ и $d_2>0$, чтобы были выполнены условия \eqref{eq_2_estim_2}, так как неравенства $d_1-3d_2>0$ и $-2.5d_1+2d_2>0$, составленные для матрицы $A$, и неравенства $d_1-2.5d_2>0$ и $-3d_1+2d_2>0$, составленные для матрицы $A^{\rm T}$, не имеют решения. 

Рассмотрим матрицу $\bar{A}=A+A^{\rm T}=\begin{bmatrix}
-2 & 0.5\\
0.5 & -4
\end{bmatrix}$. 
Условие \eqref{eq_2_estim_2} будет выполнено для $\bar{A}$, где $\sigma=\sigma_{\max}(\bar{A})=-1.5$. 
Наибольшее собственное число матрицы $A+A^{\rm T}$ равно $-1.88$.
Если воспользоваться другим условием в \eqref{eq_2_estim_2} с $d_1=1$ и $d_2=0.711$, то оценку собственного числа можно улучшить до $\sigma=\sigma^D_{\max}(\bar{A})=-1.6445$.

\textit{Пример 4. Система с нестационарными параметрами с матрицей без диагонального преобладания.} 
Рассмотрим систему \eqref{eq_2_1} с параметрами с $A(t)=A_0 + \Delta A(t)$, где $A_0=A$ из предыдущего примера, 
$\Delta A(t) = 0.1
\begin{bmatrix}
\sin(t) & \cos(t)\\
\sin(2t) & \sin(4t)
\end{bmatrix}
$. 
Оценки сверху \eqref{eq_2_estim_1_upr_1} дают отрицательные значения, следовательно, система \eqref{eq_2_1} экспоненциально устойчива.


\subsection{Синтез закона управления для линейных систем с матрицами без диагонального преобладания}
\label{Sec3}

Рассмотрим систему
\begin{equation}
\label{eq_3_1_upr}
\begin{array}{lll}
\dot{x}(t)=A(t)x(t)+B(t)u(t)+F(t)f(t),
\end{array}
\end{equation}
где 
$u \in \mathbb R^m$ -- сигнал управления, 
$B(t) \in \mathbb R^{n \times m}$, $B(t)=b(t) B_0$, $\underline{b} \leq b(t) \leq \overline{b} \in \mathbb R$, $B_0$ -- известная матрица,
пара $(A(t),B(t))$ управляема для всех $t$. 
Остальные обозначения, как в \eqref{eq_2_1}. 
Предположим, что неизвестны параметры $\Delta A(t)$, $b(t)$, $F(t)$ и $f(t)$.

Введем закон управления
\begin{equation}
\label{eq_3_2_upr}
\begin{array}{lll}
u=Kx,
\end{array}
\end{equation}
где $K \in \mathbb R^{m \times n}$. 
Ниже сформулированы теоремы, позволяющие рассчитать матрицу $K$, которая обеспечивает экспоненциальную устойчивость замкнутой системы
\begin{equation}
\label{eq_3_2_CLS}
\begin{array}{lll}
\dot{x}(t)=(A(t)+B(t)K)x(t)+F(t)f(t).
\end{array}
\end{equation}

Отметим, что ни в матрице $A(t)$, не в матрице $A(t)+B(t)K$ не требуется выполнение свойства диагонального преобладания. 

\begin{theorem}
\label{Th_3_1aa}

Пусть матрицы $A$, $B$ и $F$ в \eqref{eq_3_1_upr} известны и постоянны, а также пусть 
для заданного $\alpha>0$ существует матрица $Q=Q^{\rm T}$ и коэффициент $\beta>0$ такие, что выполнены следующие условия
\begin{equation}
\label{eq_Psi_00}
\begin{array}{lll}
\Psi_{ii}<0, 
\\
\Psi_{ij} \geq 0 ~\mbox{при}~i \neq j, ~ i,j=1,...,n,
\\
\sigma(Q)>0,
\end{array}
\end{equation}
где 
\begin{equation}
\label{eq_Psi_1}
\begin{array}{lll}
\Psi=(\Psi_{ij})=QA^{\rm T}+AQ+Y^{\rm T}B^{\rm T}+BY+\alpha Q+\beta F^{\rm T}F,
\end{array}
\end{equation}
$\sigma(Q)$ -- одна из оценок снизу на собственные значения матрицы $Q$, полученная с помощью \eqref{eq_2_estim_2}. 
Тогда система \eqref{eq_3_1_upr} экспоненциально устойчива с $K=Y Q^{-1}$.
\end{theorem}


\begin{proof}
Выберем функцию Ляпунова
\begin{equation}
\label{eq_33_V}
\begin{array}{lll}
V=x^{\rm T}Px,
\end{array}
\end{equation}
где $P=Q^{-1}$, и найдем от нее производную по времени водоль решений \eqref{eq_3_2_CLS}:
\begin{equation}
\label{eq_33_V_dot}
\begin{array}{lll}
\dot{V}=x^{\rm T}[(A+BK)^{\rm T}P+P(A+BK)]x + 2x^{\rm T}Ff.
\end{array}
\end{equation}

Обозначив $z=col\{x,f\}$ и подставив \eqref{eq_33_V} и \eqref{eq_33_V_dot} в условие экспоненциальной устойчивости $\dot{V} + \alpha V + \gamma f^{\rm T}f < 0$, $\gamma>0$, получим
\begin{equation}
\label{eq_33_V_dot2}
\begin{array}{lll}
z^{\rm T}
\begin{bmatrix}
(A+BK)^{\rm T}P+P(A+BK)+\alpha P & PF\\
\star & -\gamma I
\end{bmatrix}
z<0.
\end{array}
\end{equation}

Следуя \cite{Boyd94}, неравенство \eqref{eq_33_V_dot2} будет выполнено, если будет выполнено следующее условие:
\begin{equation}
\label{eq_2_V_dot3}
\begin{array}{lll}
\begin{bmatrix}
(A+BK)^{\rm T}P+P(A+BK)+\alpha P & PF\\
\star & -\gamma I
\end{bmatrix}
<0.
\end{array}
\end{equation}

Воспользовавшись леммой Шура \cite{Boyd94}, перепишем \eqref{eq_2_V_dot3} в виде
\begin{equation}
\label{eq_2_V_dot4}
\begin{array}{lll}
(A+BK)^{\rm T}P+P(A+BK)+\alpha P + \beta PF^{\rm T}FP<0,
\end{array}
\end{equation}
где $\beta=1/\gamma$. 
Умножив слева и справа \eqref{eq_2_V_dot4} на $Q^{-1}$ и заменив $Y=KQ$, получим
\begin{equation}
\label{eq_2_V_dot5}
\begin{array}{lll}
\Psi:=QA^{\rm T}+AQ+Y^{\rm T}B^{\rm T}+BY+\alpha Q+\beta F^{\rm T}F<0.
\end{array}
\end{equation}

Согласно леммам \ref{lemma1} и \ref{lemma2}, собственные значения симметричных матриц $\Psi$ и $Q$ будут отрицательны и положительны соответственно, если будут выполнены неравенства \eqref{eq_Psi_00}. 
С другой стороны, согласно \cite{Khorn89} (теорема 7.2.1), эрмитова матрица положительно (отрицательно) определена в том и только в том случае, если все ее собственные значения положительны (отрицательны). 
Значит, условия $\Psi<0$ и $Q>0$ будут выполнены, если будут выполнены неравенства \eqref{eq_Psi_00}. 
$\blacksquare$
\end{proof}


С использованием результатов теоремы \ref{Th_3_1aa}, сформулируем следующую теорему для гораздо более широкого класса исследуемых систем с неизвестными нестационарным параметрами.


\begin{theorem}
\label{Th_3_1bb}
Рассмотрим систему \eqref{eq_3_1_upr} с нестационарными параметрами. 
Пусть существует матрица $Q=Q^{\rm T}$ и коэффициент $\beta>0$ такие, что выполнены условия
\begin{equation}
\label{eq_Psi_0}
\begin{array}{lll}
\Phi_{ii} < 0, 
\\
\Phi_{ij} \geq 0 ~\mbox{при}~i \neq j, 
\\
\sigma(Q) > 0,
\end{array}
\end{equation}
в вершинах $|\Delta a_{ij}(t)| \leq m_{ij}$ и $\underline{b} \leq b(t) \leq \overline{b}$, где 
\begin{equation}
\label{eq_Psi_1}
\begin{array}{lll}
\Phi=(\Phi_{ij})=&QA_0^{\rm T}+A_0Q+Q\Delta A^{\rm T}(t)+\Delta A(t) Q+
\\
&+b(t) Y^{\rm T} B_0^{\rm T}+b(t) B_0Y+\alpha Q+\beta \bar{F}^2 I,
\end{array}
\end{equation}
$\sigma(\Psi)$ -- одна из оценок сверху матрицы $\Psi$, полученная с помощью \eqref{eq_2_estim_1_d_est}. 
Тогда система \eqref{eq_3_1_upr} будет экспоненциально устойчивой с $K=Y Q^{-1}$, $P=Q^{-1}$.
\end{theorem}


\begin{proof} 
Воспользуемся результатами \eqref{eq_33_V}--\eqref{eq_2_V_dot5} из доказательства теоремы \ref{Th_3_1aa} с учетом нестационарных параметров. 
Поскольку $A(t)=A_0+\Delta A(t)$, $B(t)=b(t) B_0$ и $\|F(t)\| \leq \bar{F}$, то перепишем \eqref{eq_2_V_dot5} в виде
\begin{equation}
\label{eq_2_V_dot4_00}
\begin{array}{lll}
\Phi=&QA_0^{\rm T}+A_0Q+Q\Delta A^{\rm T}(t)+\Delta A(t) Q+
\\
&+b(t) Y^{\rm T} B_0^{\rm T}+b(t) B_0Y+\alpha Q+\beta \bar{F}^2 I<0.
\end{array}
\end{equation}
Если выполнены условия \eqref{eq_Psi_0} в вершинах $|\Delta a_{ij}(t)| \leq m_{ij}$ и $\underline{b} \leq b(t) \leq \overline{b}$ то, согласно \cite{Boyd94}, условие \eqref{eq_Psi_0} будет выполнено для любых $\Delta A(t)$ и $b(t)$ внутри политопа с вершинами $|\Delta a_{ij}(t)| \leq m_{ij}$ и $\underline{b} \leq b(t) \leq \overline{b}$.
$\blacksquare$
\end{proof}


\textit{Пример 5.} 
Рассмотрим систему \eqref{eq_3_1_upr} с параметрами
$A =
\begin{bmatrix}
0 & 1 & 0\\
0 & 0 & 1\\
1 & 2 & 3
\end{bmatrix}
$, 
$B = col\{0,0,1\}$,
$F = col\{0.1,0.5,1\}$ и $f(t)=\sin(t)$. 

Очевидно, что матрица $A$ без диагонального преобладания, а структура матрицы $B$ не позволяет законом управления $u=Kx$ с $K \in \mathbb R^{1 \times 3}$ привести к замкнутой системе с матрицей с диагональным преобладанием. 
Поэтому, воспользуемся теоремой \ref{Th_3_1aa} для анализа области локализации собственных значений матрицы $\Phi$, полученной в результате применения метода функций Ляпунова. 
Воспользовавшись теоремой \ref{Th_3_1aa}, получим $K =col\{-1.3671~-2.3619~-2.5724\}$ и $trace(P)=25.5858$. 
Воспользовавшись \cite{Topunov07}, получим $K =col\{-2.8862~-4.9244~-3.2136\}$ и $trace(P)=40.631$. 
В обоих случаях ставилась цель $trace(P) \to \min$ для расчета $K$. 

Из рис.~\ref{Ex5_fig} видно, что в установившемся режиме значение $|x(t)|$ предложенного алгоритма больше. 
Однако меньше всплеск $|x(t)|$ и амплитуда сигнала управления $u(t)$ в начальный момент времени, а также меньше значение $trace(P)$.

\begin{figure}[H]
\begin{minipage}[H]{0.44\linewidth}
\center{\includegraphics[width=1\linewidth]{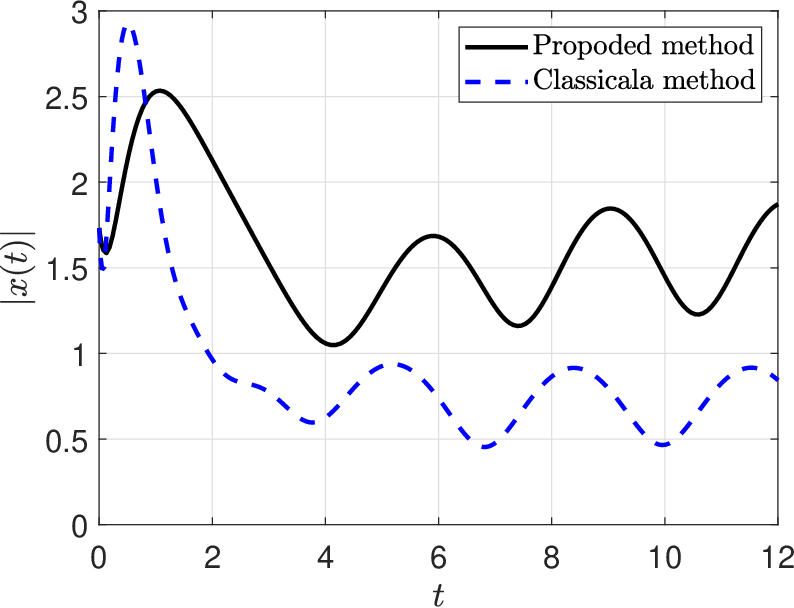}}
\end{minipage}
\hfill
\begin{minipage}[H]{0.44\linewidth}
\center{\includegraphics[width=1\linewidth]{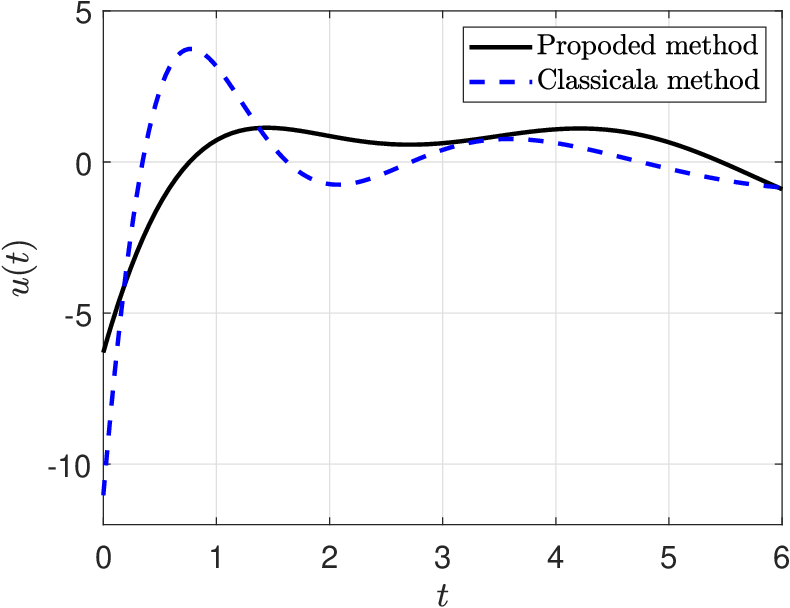}}
\end{minipage}
\caption{Переходные процессы по $|x(t)|$ и $u(t)$ для предложенного алгоритма (сплошные кривые) и алгоритма \cite{Topunov07} (пунктирные кривые).}
\label{Ex5_fig}
\end{figure}

\section{Заключение}
\label{Sec5}

В статье рассмотрено применение теоремы о кругах Гершгорина и производных от нее теорем для оценки области локализации собственных значений матрицы с постоянными и известными параметрами. 
Затем, данные результаты обобщаются на оценку области локализации для матриц с параметрической интервальной неопределенностью. 
Предложено понятие $e$-круга, позволяющее получить более точные оценки области локализации, чем прямое применение теоремы Гершгорина. 
Полученные результаты применены к управлению сетевыми системами, где показано, что для задач большой размерности предложенные результаты наименее затратны по времени выполнения операции по сравнению с процедурами eig и lyap (команды в MatLab для нахождения собственных значений матрицы и решения уравнения Ляпунова), а также cvx и yalmip, для решения линейных матричных неравенств. 
Предложено обобщение условия Демидовича для выяснения устойчивости нестационарной матрицы. 
Разработан подход по расчету матрицы в линейном законе управления при управлении линейными системами, где не выполнено свойство диагонального преобладания для матриц в замкнутой системе.

\end{document}